\documentclass[10pt,conference]{IEEEtran}
\usepackage{amsmath}
\usepackage{amssymb}
\usepackage{upref}
\usepackage{amsfonts}
\usepackage{graphicx}
\usepackage{color}
\usepackage{epic}
\usepackage{eepic}
\usepackage{subfig}
\usepackage{verbatim}

\textheight9.6in
\textwidth7.17in
\oddsidemargin-0.37in
\evensidemargin0.37in

%
%
%
%
%
\edef\psfigRestoreAt{\catcode`@=\number\catcode`@\relax}
\catcode`\@=11\relax
\newwrite\@unused
\def\typeout#1{{\let\protect\string\immediate\write\@unused{#1}}}
\typeout{psfig/tex 1.5-dvips}


\def\figurepath{./}

%
%
\def\@nnil{\@nil}
\def\@empty{}
\def\@psdonoop#1\@@#2#3{}
\def\@psdo#1:=#2\do#3{\edef\@psdotmp{#2}\ifx\@psdotmp\@empty \else
    \expandafter\@psdoloop#2,\@nil,\@nil\@@#1{#3}\fi}
\def\@psdoloop#1,#2,#3\@@#4#5{\def#4{#1}\ifx #4\@nnil \else
       #5\def#4{#2}\ifx #4\@nnil \else#5\@ipsdoloop #3\@@#4{#5}\fi\fi}
\def\@ipsdoloop#1,#2\@@#3#4{\def#3{#1}\ifx #3\@nnil 
       \let\@nextwhile=\@psdonoop \else
      #4\relax\let\@nextwhile=\@ipsdoloop\fi\@nextwhile#2\@@#3{#4}}
\def\@tpsdo#1:=#2\do#3{\xdef\@psdotmp{#2}\ifx\@psdotmp\@empty \else
    \@tpsdoloop#2\@nil\@nil\@@#1{#3}\fi}
\def\@tpsdoloop#1#2\@@#3#4{\def#3{#1}\ifx #3\@nnil 
       \let\@nextwhile=\@psdonoop \else
      #4\relax\let\@nextwhile=\@tpsdoloop\fi\@nextwhile#2\@@#3{#4}}
%
%
%
\newread\ps@stream
\newif\ifnot@eof       
\newif\if@noisy        
\newif\if@atend        
\newif\if@psfile       
%
%
{\catcode`\%=12\global\gdef\epsf@start{
\def\epsf@PS{PS}
\def\epsf@getbb#1{%
%
%
\openin\ps@stream=#1
\ifeof\ps@stream\typeout{Error, File #1 not found}\else
%
%
   {\not@eoftrue \chardef\other=12
    \def\do##1{\catcode`##1=\other}\dospecials \catcode`\ =10
    \loop
       \if@psfile
	  \read\ps@stream to \epsf@fileline
       \else{
	  \obeyspaces
          \read\ps@stream to \epsf@tmp\global\let\epsf@fileline\epsf@tmp}
       \fi
       \ifeof\ps@stream\not@eoffalse\else
%
%
       \if@psfile\else
       \expandafter\epsf@test\epsf@fileline:. \\%
       \fi
%
%
          \expandafter\epsf@aux\epsf@fileline:. \\%
       \fi
   \ifnot@eof\repeat
   }\closein\ps@stream\fi}%
%
%
\long\def\epsf@test#1#2#3:#4\\{\def\epsf@testit{#1#2}
			\ifx\epsf@testit\epsf@start\else
\typeout{Warning! File does not start with `\epsf@start'.  It may not be a PostScript file.}
			\fi
			\@psfiletrue} 
%
%
{\catcode`\%=12\global\let\epsf@percent=
%
%
%
\long\def\epsf@aux#1#2:#3\\{\ifx#1\epsf@percent
   \def\epsf@testit{#2}\ifx\epsf@testit\epsf@bblit
	\@atendfalse
        \epsf@atend #3 . \\%
	\if@atend	
	   \if@verbose{
		\typeout{psfig: found `(atend)'; continuing search}
	   }\fi
        \else
        \epsf@grab #3 . . . \\%
        \not@eoffalse
        \global\no@bbfalse
        \fi
   \fi\fi}%
%
%
\def\epsf@grab #1 #2 #3 #4 #5\\{%
   \global\def\epsf@llx{#1}\ifx\epsf@llx\empty
      \epsf@grab #2 #3 #4 #5 .\\\else
   \global\def\epsf@lly{#2}%
   \global\def\epsf@urx{#3}\global\def\epsf@ury{#4}\fi}%
%
%
\def\epsf@atendlit{(atend)} 
\def\epsf@atend #1 #2 #3\\{%
   \def\epsf@tmp{#1}\ifx\epsf@tmp\empty
      \epsf@atend #2 #3 .\\\else
   \ifx\epsf@tmp\epsf@atendlit\@atendtrue\fi\fi}


\def\psdraft{
	\def\@psdraft{0}
}
\def\psfull{
	\def\@psdraft{100}
}

\psfull

\newif\if@draftbox
\def\psnodraftbox{
	\@draftboxfalse
}
\@draftboxtrue

\newif\if@prologfile
\newif\if@postlogfile
\def\pssilent{
	\@noisyfalse
}
\def\psnoisy{
	\@noisytrue
}
\psnoisy
\newif\if@bbllx
\newif\if@bblly
\newif\if@bburx
\newif\if@bbury
\newif\if@height
\newif\if@width
\newif\if@rheight
\newif\if@rwidth
\newif\if@clip
\newif\if@verbose
\def\@p@@sclip#1{\@cliptrue}


\def\@p@@sfile#1{\def\@p@sfile{null}%
	        \openin1=#1
		\ifeof1\closein1%
		       \openin1=\figurepath#1
			\ifeof1\typeout{Error, File #1 not found}
			\else\closein1
			    \edef\@p@sfile{\figurepath#1}%
                        \fi%
		 \else\closein1%
		       \def\@p@sfile{#1}%
		 \fi}
\def\@p@@sfigure#1{\def\@p@sfile{null}%
	        \openin1=#1
		\ifeof1\closein1%
		       \openin1=\figurepath#1
			\ifeof1\typeout{Error, File #1 not found}
			\else\closein1
			    \def\@p@sfile{\figurepath#1}%
                        \fi%
		 \else\closein1%
		       \def\@p@sfile{#1}%
		 \fi}

\def\@p@@sbbllx#1{
		\@bbllxtrue
		\dimen100=#1
		\edef\@p@sbbllx{\number\dimen100}
}
\def\@p@@sbblly#1{
		\@bbllytrue
		\dimen100=#1
		\edef\@p@sbblly{\number\dimen100}
}
\def\@p@@sbburx#1{
		\@bburxtrue
		\dimen100=#1
		\edef\@p@sbburx{\number\dimen100}
}
\def\@p@@sbbury#1{
		\@bburytrue
		\dimen100=#1
		\edef\@p@sbbury{\number\dimen100}
}
\def\@p@@sheight#1{
		\@heighttrue
		\dimen100=#1
   		\edef\@p@sheight{\number\dimen100}
}
\def\@p@@swidth#1{
		\@widthtrue
		\dimen100=#1
		\edef\@p@swidth{\number\dimen100}
}
\def\@p@@srheight#1{
		\@rheighttrue
		\dimen100=#1
		\edef\@p@srheight{\number\dimen100}
}
\def\@p@@srwidth#1{
		\@rwidthtrue
		\dimen100=#1
		\edef\@p@srwidth{\number\dimen100}
}
\def\@p@@ssilent#1{ 
		\@verbosefalse
}
\def\@p@@sprolog#1{\@prologfiletrue\def\@prologfileval{#1}}
\def\@p@@spostlog#1{\@postlogfiletrue\def\@postlogfileval{#1}}
\def\@cs@name#1{\csname #1\endcsname}
\def\@setparms#1=#2,{\@cs@name{@p@@s#1}{#2}}
%
%
\def\ps@init@parms{
		\@bbllxfalse \@bbllyfalse
		\@bburxfalse \@bburyfalse
		\@heightfalse \@widthfalse
		\@rheightfalse \@rwidthfalse
		\def\@p@sbbllx{}\def\@p@sbblly{}
		\def\@p@sbburx{}\def\@p@sbbury{}
		\def\@p@sheight{}\def\@p@swidth{}
		\def\@p@srheight{}\def\@p@srwidth{}
		\def\@p@sfile{}
		\def\@p@scost{10}
		\def\@sc{}
		\@prologfilefalse
		\@postlogfilefalse
		\@clipfalse
		\if@noisy
			\@verbosetrue
		\else
			\@verbosefalse
		\fi
}
%
%
\def\parse@ps@parms#1{
	 	\@psdo\@psfiga:=#1\do
		   {\expandafter\@setparms\@psfiga,}}
%
%
\newif\ifno@bb
\def\bb@missing{
	\if@verbose{
		\typeout{psfig: searching \@p@sfile \space  for bounding box}
	}\fi
	\no@bbtrue
	\epsf@getbb{\@p@sfile}
        \ifno@bb \else \bb@cull\epsf@llx\epsf@lly\epsf@urx\epsf@ury\fi
}	
\def\bb@cull#1#2#3#4{
	\dimen100=#1 bp\edef\@p@sbbllx{\number\dimen100}
	\dimen100=#2 bp\edef\@p@sbblly{\number\dimen100}
	\dimen100=#3 bp\edef\@p@sbburx{\number\dimen100}
	\dimen100=#4 bp\edef\@p@sbbury{\number\dimen100}
	\no@bbfalse
}
\def\compute@bb{
		\no@bbfalse
		\if@bbllx \else \no@bbtrue \fi
		\if@bblly \else \no@bbtrue \fi
		\if@bburx \else \no@bbtrue \fi
		\if@bbury \else \no@bbtrue \fi
		\ifno@bb \bb@missing \fi
		\ifno@bb \typeout{FATAL ERROR: no bb supplied or found}
			\no-bb-error
		\fi
		\count203=\@p@sbburx
		\count204=\@p@sbbury
		\advance\count203 by -\@p@sbbllx
		\advance\count204 by -\@p@sbblly
		\edef\@bbw{\number\count203}
		\edef\@bbh{\number\count204}
}
%
%
\def\in@hundreds#1#2#3{\count240=#2 \count241=#3
		     \count100=\count240	
		     \divide\count100 by \count241
		     \count101=\count100
		     \multiply\count101 by \count241
		     \advance\count240 by -\count101
		     \multiply\count240 by 10
		     \count101=\count240	
		     \divide\count101 by \count241
		     \count102=\count101
		     \multiply\count102 by \count241
		     \advance\count240 by -\count102
		     \multiply\count240 by 10
		     \count102=\count240	
		     \divide\count102 by \count241
		     \count200=#1\count205=0
		     \count201=\count200
			\multiply\count201 by \count100
		 	\advance\count205 by \count201
		     \count201=\count200
			\divide\count201 by 10
			\multiply\count201 by \count101
			\advance\count205 by \count201
		     \count201=\count200
			\divide\count201 by 100
			\multiply\count201 by \count102
			\advance\count205 by \count201
		     \edef\@result{\number\count205}
}
\def\compute@wfromh{
		\in@hundreds{\@p@sheight}{\@bbw}{\@bbh}
		\edef\@p@swidth{\@result}
}
\def\compute@hfromw{
		\in@hundreds{\@p@swidth}{\@bbh}{\@bbw}
		\edef\@p@sheight{\@result}
}
\def\compute@handw{
		\if@height 
			\if@width
			\else
				\compute@wfromh
			\fi
		\else 
			\if@width
				\compute@hfromw
			\else
				\edef\@p@sheight{\@bbh}
				\edef\@p@swidth{\@bbw}
			\fi
		\fi
}
\def\compute@resv{
		\if@rheight \else \edef\@p@srheight{\@p@sheight} \fi
		\if@rwidth \else \edef\@p@srwidth{\@p@swidth} \fi
}
%
\def\compute@sizes{
	\compute@bb
	\compute@handw
	\compute@resv
}
%
%
\def\psfig#1{\vbox {
	%
	\ps@init@parms
	\parse@ps@parms{#1}
	\compute@sizes
	\ifnum\@p@scost<\@psdraft{
		\if@verbose{
			\typeout{psfig: including \@p@sfile \space }
		}\fi
		\special{ps::[begin] 	\@p@swidth \space \@p@sheight \space
				\@p@sbbllx \space \@p@sbblly \space
				\@p@sbburx \space \@p@sbbury \space
				startTexFig \space }
		\if@clip{
			\if@verbose{
				\typeout{(clip)}
			}\fi
			\special{ps:: doclip \space }
		}\fi
		\if@prologfile
		    \special{ps: plotfile \@prologfileval \space } \fi
		\special{ps: plotfile \@p@sfile \space }
		\if@postlogfile
		    \special{ps: plotfile \@postlogfileval \space } \fi
		\special{ps::[end] endTexFig \space }
		\vbox to \@p@srheight true sp{
			\hbox to \@p@srwidth true sp{
				\hss
			}
		\vss
		}
	}\else{
		\if@draftbox{		
			\hbox{\fbox{\vbox to \@p@srheight true sp{
			\vss
			\hbox to \@p@srwidth true sp{ \hss \@p@sfile \hss }
			\vss
			}}}
		}\else{
			\vbox to \@p@srheight true sp{
			\vss
			\hbox to \@p@srwidth true sp{\hss}
			\vss
			}
		}\fi

	}\fi
}}
\def\psglobal{\typeout{psfig: PSGLOBAL is OBSOLETE; use psprint -m instead}}
\psfigRestoreAt

\newcommand{\eqdef}{\stackrel{\triangle}{=}}
\newcommand{\deff}{\mbox{$\stackrel{\rm def}{=}$}}

\newcommand{\tuple}[1]{$#1$-tuple}
\newcommand{\mathset}[1]{\left\{#1\right\}}
\newcommand{\floorenv}[1]{\left\lfloor #1 \right\rfloor}
\newcommand{\sbinom}[2]{\left[ \begin{array}{c} #1 \\ #2 \end{array} \right] }
\newcommand{\sbinomq}[2]{\sbinom{#1}{#2}_q }
\newcommand{\sbinomtwo}[2]{\sbinom{#1}{#2}_2 }
\newcommand{\dm}[1]{$#1$-dimensional}
\newcommand{\porg}{p_{\text{org}}}
\newcommand{\xorg}{x_{\text{org}}}
\newcommand{\yorg}{y_{\text{org}}}
\newcommand{\xmax}{x_{\max}}
\newcommand{\ymax}{y_{\max}}
\newcommand{\xmin}{x_{\min}}
\newcommand{\ymin}{y_{\min}}
\newcommand{\field}[1]{\mathbb{#1}}
\newcommand{\C}{\field{C}}
\newcommand{\F}{\field{F}}
\newcommand{\Z}{\field{Z}}
\newcommand{\N}{\field{N}}
\newcommand{\dP}{\field{P}}
\newcommand{\R}{\field{R}}
\newcommand{\cI}{{\cal I}}
\newcommand{\cF}{{\cal F}}
\newcommand{\cH}{{\cal H}}
\newcommand{\cA}{{\cal A}}
\newcommand{\cB}{{\cal B}}
\newcommand{\cC}{{\cal C}}
\newcommand{\cG}{{\cal G}}
\newcommand{\cL}{{\cal L}}
\newcommand{\cM}{{\cal M}}
\newcommand{\cO}{{\cal O}}
\newcommand{\cS}{{\cal S}}
\newcommand{\cV}{{\cal V}}
\newcommand{\cP}{{\cal P}}
\newcommand{\cQ}{{\cal Q}}
\newcommand{\cT}{{\cal T}}
\newcommand{\cX}{{\cal X}}
\newcommand{\cN}{{\cal N}}
\newcommand{\inter}[1]{$#1$-intersecting}
\newcommand{\cnt}{\Omega}
\newcommand{\dual}[1]{{#1}^{\perp}}
\newcommand{\bl}{\overline{l}}
\newcommand{\bc}{\overline{c}}
\newcommand{\od}{\overline{d}}

\newcommand{\sP}{\cP}
\newcommand{\sG}{\cG}
\newcommand{\bul}{\bullet}
\newcommand{\Gr}{\smash{{\sG\kern-1.5pt}_q\kern-0.5pt(n,k)}}
\newcommand{\Gfourk}{\smash{{\sG\kern-1.5pt}_q\kern-0.5pt(4k,2k)}}
\newcommand{\Gk}{\smash{{\sG\kern-1.5pt}_q\kern-0.5pt(n,k_1)}}
\newcommand{\Gkk}{\smash{{\sG\kern-1.5pt}_q\kern-0.5pt(n,k_2)}}
\newcommand{\Grtwo}{\smash{{\sG\kern-1.5pt}_2\kern-0.5pt(n,k)}}
\newcommand{\Gkone}{\smash{{\sG\kern-1.5pt}_q\kern-0.5pt(n,k_1)}}
\newcommand{\Gktwo}{\smash{{\sG\kern-1.5pt}_q\kern-0.5pt(n,k_2)}}
\newcommand{\Ps}{\smash{{\sP\kern-2.0pt}_q\kern-0.5pt(n)}}
\newcommand{\Span}[1]{{\left\langle {#1} \right\rangle}}

\newcommand{\be}{{\bf e}}
\newcommand{\bG}{{\bf G}}
\newcommand{\bi}{{\bf i}}
\newcommand{\bv}{{\bf v}}
\newcommand{\bw}{{\bf w}}
\newcommand{\oal}{\overline{\alpha}}
\newcommand{\ual}{\underline{\alpha}}
\newcommand{\octstar}{+\mspace{-13.75mu}\times}
\newcommand{\dos}{d^{\octstar}}
\newcommand{\Aos}{A^{\octstar}}
\newcommand{\los}{\Lambda^{\octstar}}
\newcommand{\bGos}{\bG^{\octstar}}
\newcommand{\dhex}{d^{\text{hex}}}
\newcommand{\Ahex}{A^{\text{hex}}}
\newcommand{\trhex}{TR^{\text{hex}}}
\newcommand{\bGhex}{\bG^{\text{hex}}}
\newcommand{\lex}{\Lambda^{\text{hex}}}
\newcommand{\gf}[2]{GF\roundb{{#1}^{#2}}}
\newcommand{\code}{\mathcal{C}}

\newtheorem{theorem}{Theorem}

\newtheorem{remark}{Remark}
\newtheorem{example}{Example}
\newcommand{\ep}{\varepsilon}
\newcommand{\eps}[1]{{#1}_{\varepsilon}}

\newcommand{\Gauss}[2]{\begin{footnotesize}\left[\begin{array}
{c}#1\\#2\end{array}\right]_{q}\end{footnotesize}}

\DeclareMathOperator{\rank}{rank}

\begin{document}

\title{Error Resilience in Distributed Storage\\ via Rank-Metric Codes}
\author{
\IEEEauthorblockN{Natalia Silberstein, Ankit Singh Rawat and Sriram Vishwanath}
\IEEEauthorblockA{LINC, Dept. of ECE, UT Austin \\
Austin, TX 78712, USA.\\
Email: \{natalys, ankitsr, sriram\}@austin.utexas.edu}
}

\maketitle
\begin{abstract}
This paper presents a novel coding scheme for distributed storage systems containing nodes with adversarial errors.
The key challenge in such systems is the propagation of  erroneous data from a single corrupted node to the rest
of the system during a node repair process.
This paper presents a concatenated coding scheme which is based on two types of codes:
maximum rank distance (MRD) code as an outer code and  optimal repair maximal distance separable (MDS)
array code as an inner code. Given this, two different types of adversarial errors are considered: the first type considers an adversary that can replace the content of an affected node only once; while the second attack-type considers an adversary that can pollute data an unbounded number of times. This paper proves that the proposed coding scheme attains
a suitable upper bound on resilience capacity for the first type of error. Further, the paper presents mechanisms that combine this code with subspace signatures to achieve error resilience for the second type of errors. Finally, the paper concludes by presenting a construction based on MRD codes for optimal locally repairable scalar codes that can tolerate adversarial errors.

\end{abstract}
\section{Introduction}
\label{sec:introduction}


Distributed storage systems (DSS) are increasingly important resources today for users and businesses than ever before. Centralized storage is proving to be rapidly infeasible given the growing sizes of databases being stored; as well as the need to rapidly and reliably access them. Thus, generally, data is distributed and often replicated across multiple locations to enable ease-of-access and reslience to failures. However, replication can be inefficient in terms of storage space, and therefore, coding is useful in instilling resilience to node failures while reducing storage requirements over replication. Given the prevalence of single node failures in DSS  (a user exiting a P2P system, power outage in a single data center in the cloud), coding enables a single node to be repaired as soon as the node failure occurs in order to sustain the desired level of redundancy in the system.

In order to repair the failed node, data is downloaded from surviving nodes and a function of this data is stored as the `restored' node. The amount of data downloaded in this repair process is called the {\em repair bandwidth} of the reconstruction process.  
Since large repair bandwidth translate to consumption of a vast amount of system resources (in terms of bandwidth and energy), it is desirable to have a coding schemes that have as small a repair bandwidth as possible In \cite{dimakis}, Dimakis et al. establish an information theoretic lower bound on {repair bandwidth} for $(n,k)$ codes with maximum distance separable (MDS) property (i.e., any set of $k$ nodes can be used to reconstruct the data stored on a DSS). The work also presents a trade-off between  repair bandwidth and the amount of data stored on each node.

A desirable notion of repair is \textit{exact repair}, where the regenerated data is an exact replica of what was stored on the failed node. The work in \cite{wu3, shah, suh1} present low rate codes, which achieve the lower bound derived in \cite{dimakis} when data is downloaded from all surviving nodes during node repair. 
In \cite{rashmi}, Rashmi et al. design exact-repairable codes, which allow node repair to be performed by contacting $d \leq n-1$ surviving nodes. 
At the minimum storage regeneration (MSR) point of operation (see \cite{dimakis} for details), these codes also correspond to a low rate regime as their rate is upper bounded by $\frac{1}{2} + \frac{1}{2n}$. Recently, various researchers have presented high(er) rate codes for the MSR point which enable bandwidth-efficient exact repair. Along a similar view, \cite{dimitris11} presents codes for DSS with two parity nodes. In \cite{cadambe11} and \cite{bruck11}, permutation-matrix based codes are presented which are designed to achieve the bound on {repair bandwidth} for repair of systematic nodes for all $(n,k)$ pairs. \cite{bruck11_2} further generalizes the idea of \cite{bruck11} to get MDS array codes for DSS that allow optimal exact regeneration for parity nodes as well.

While a bulk of existing literature in DSS  addresses the storage versus  repair bandwidth trade-off, another important issue that has recently received attention is the design of storage schemes that ensure security and reliability of the stored content against \textit{adversarial errors}~\cite{ho10,datta_byz,pawar, RSRK12}; and this issue of inducing reliability against adversarial errors is the main focus of our paper. The dynamic nature of DSS resulting from repeated node failures, triggering {\em node repairs}, makes the problem of dealing with erroneous nodes non-trivial. This is because a single corrupted node may subsequently pollute other nodes in   the DSS system during the  node repair process. In \cite{pawar}, Pawar et al. address the reliability issue in detail and derive upper bounds on the amount of data that can be stored on the system and reliably made available to a data collector when bandwidth optimal node repair is performed.
 ~\cite{pawar} also presents coding strategies that achieve the upper bound for a particular range of system parameters, namely in the bandwidth-limited regime. A related but different problem of securing stored data against passive eavesdroppers is addressed in \cite{pawar, SRK11}.

In this paper, we study the notion of an omniscient adversary\footnote{In \cite{RSRK12}, Rashmi et al. show that product matrix codes~\cite{rashmi} can be used towards dealing with adversarial errors. This solution is again restricted to the regime $k \leq \frac{n+1}{2}$, at the MSR point of operation.}, which can observe all nodes and has full knowledge of the coding scheme employed by the system \cite{pawar}. As in~\cite{pawar}, we assume an upper bound on the number of nodes that can be controlled by such an omniscient adversary. 
We classify adversarial attacks into two classes:
\begin{enumerate}
\item {\em Static errors:}  an omniscient adversary replaces the content of an affected node with nonsensical information \emph{only once}. The affected node 
     uses this \emph{same} polluted information during all subsequent repair and data collection processes. Static errors represent a common type of data corruption due to wear out of storage devices, such as latent disk errors or other physical defects of the storage media, where the data stored on a node is permanently distorted. 
\item {\em Dynamic errors:} an omniscient adversary may replace the content of an affected node, \emph{each time} the node is asked for the data during data collection or repair process. This kind of errors captures any malicious behaviour, hence is more difficult to manage in comparison to static errors.
\end{enumerate}

 We present a novel concatenated coding scheme for DSS which provides resilience against these two classes of attacks. 
 In our scheme, the content to be stored is first encoded using a maximum rank distance (MRD) code. The output of this outer code is further encoded using a maximum distance separable (MDS) array code, which can be repaired in a bandwidth efficient manner. Using an MRD code, which is an optimal rank-metric code, allows us to quantify the errors introduced in the system using their rank as opposed to their Hamming weights. As stated before, the dynamic nature of the DSS causes a large number of nodes to get polluted even by a single erroneous node, as  false information spreads from node repairs. Thus, a single polluted node infects many others, resulting in an error vector with a large Hamming weight. Using rank-metric codes can  help alleviate this problem as the error that a data collector has to handle has a known rank, and can therefore be corrected by an MRD code with a sufficient rank distance.
 Using an $(n,k)$ bandwidth efficient MDS array code as inner code facilitates bandwidth efficient  node repair in the event of a single node failure and allows the data collector to recover the original data from any subset of $k$ storage nodes. In this paper, we use exact-regenerating  bandwidth efficient codes operating at the minimum-storage regenerating (MSR) point~\cite{dimakis}. However, our construction works for any regenerating code.

The proposed coding scheme is directly applicable to the \textit{static error} model and is optimal in terms of amount of data that can be reliably stored on a DSS under static error model. The model with \textit{dynamic errors} is more complicated, as it allows a single malicious node to change its pollution pattern, and introduce an arbitrarily large error both in Hamming weight and in rank. For dynamic error model, we propose two solutions based on the concatenated coding scheme employed for static error model. One solution is to exploit the inherent redundancy in the encoded data due to outer code, i.e., an MRD code, and perform error free node repair even in the presence of adversarial nodes. This solution, namely na\"{\i}ve method, is optimal for a specific choice of parameters. Alternative solution combines our concatenated coding scheme with subspace signature based cryptographic schemes to control the amount (rank) of pollution (error) introduced by an adversarial node. We employ the signature scheme by Zhao et al. \cite{zhao}, which essentially reduces the dynamic error model to somewhat similar to static error model, and helps us bound the rank of error introduced by an adversarial node throughout its presence in the system. Note that hash function based solutions has previously been presented in the context of DSS to deal with errors \cite{ho10,pawar}. While promising, these hash functions based approaches provide only probabilistic guarantees for pollution containment.

Recently a new line of work has appeared in the context of DSS, which focuses on minimizing the number of surviving nodes that contribute data during node repair (locality). In \cite{GHSY11}, Gopalan et al. establish an upper bound on the minimum distance of scalar linear codes that have locality $r < k$. They further show the optimality of pyramid codes \cite{HCL07} with respect to the upper bound, with locality of the systematic nodes. \cite{PKLK12} gives a similar bound for codes that have a more general definition of locality. In \cite{PaDi12}, Papailiopoulos et al. generalize the bound in \cite{GHSY11} to vector codes (possibly nonlinear). In this paper, we also show that MRD codes can be utilized to construct optimal (in terms of minimum distance) locally repairable scalar codes, which can tolerate adversarial errors.

The rest of the paper is organized as follows: In Section~\ref{sec:preliminaries}, we first give a brief description of rank-metric codes along with Gabidulin MRD codes and the error model in rank-metric for these codes. Subsequently, we describe  MDS array codes and present two examples of  bandwidth efficient MDS array codes that are later used as inner codes in our construction. In Section~\ref{sec:construction}, we describe the construction of our storage scheme and prove its error resilience under the static error model. Further, we  present a few examples to illustrate our scheme and show that our codes attain the upper bound on resilience capability. In Section~\ref{sec:dynamic}, we address the dynamic error model. In Section~\ref{sec:locality}, we describe MRD codes based locally repairable storage schemes. Finally, we conclude with Section~\ref{sec:conclusion}.



\section{Preliminaries}
\label{sec:preliminaries}

\subsection{System Model}
\label{subsec:model}
Let $M$ be the size of a file to be stored in a distributed storage system with $n$ nodes.
All data symbols belong to a finite field $\F_{Q}$ of size $Q$. Each node contains
$\alpha$ symbols. A data collector reconstructs the original file by downloading the
data stored in any set of $k$ out of $n$ nodes. This property is called an \emph{MDS property}.
When a node fails, its content can be reconstructed by downloading $\beta$ symbols from
any $d\geq k$ surviving nodes and stored in a new node added to DSS instead of the failed node.
The amount of data needed for a node repair, denoted by $\gamma=\beta d$, is called the \emph{repair bandwidth}.
In~\cite{dimakis}, Dimakis et al. show a trade-off between  repair bandwidth $\gamma$ and the amount of data $\alpha$ stored on each node.
Two families of codes which attain two extremal points on the optimal tradeoff curve are called
\emph{minimum-storage regenerating (MSR) codes} and\emph{ minimum-bandwidth regenerating (MBR) codes}, respectively.
In the sequel we will focuss on the MSR codes. It was proved~\cite{dimakis} that for MSR codes $\alpha$ and $\gamma$ satisfy
\begin{equation}
\label{eq:trade-off}
(\alpha, \gamma)=\left(\frac{M}{k},\: \frac{Md}{k(d-k+1)}\right).
\end{equation}

\subsection{Rank-Metric Codes}
\label{subsec:rank-metric}

Rank-metric codes were introduced by Delsarte~\cite{Del78} and rediscovered
in~\cite{Gab85,Rot91}. These codes have
applications in different fields, such as space-time coding~\cite{LGB03}, random network coding~\cite{SKK08,EtSi09},
and  public key cryptosystems~\cite{GPT91}.
Our goal in this paper is to show that rank-metric codes are useful for error correction in distributed storage as well.

Let $\F_q$ be the finite field of size $q$. For two $N \times m$ matrices $A$ and $B$ over $\F_q$ the {\it
rank distance} is defined by
$$
d_R (A,B) \deff \text{rank}(A-B)~.
$$
An  $[N \times m,\varrho,\delta]$ {\it rank-metric code} $\cC$
is a linear code, whose codewords are $N \times m$ matrices
over $\F_q$; they form a linear subspace with dimension $\varrho$ of
$\F_q^{N \times m}$, and for each two distinct codewords $A$
and $B$, $d_R (A,B) \geq \delta$.
For an $[N \times m,\varrho,\delta]$ rank-metric code $\cC$ we have $\varrho \leq
\text{min}\{N(m-\delta+1),m(N-\delta+1)\}$
~\cite{Del78,Gab85,Rot91}. This bound, called Singleton bound for rank metric,
is achievable for all feasible parameters. The codes that achieve this bound are called {\it
maximum rank distance} (MRD) codes.

\subsubsection{Gabidulin MRD Codes}
\label{subsubsec:Gabidulin}
An important family of MRD linear codes is presented by Gabidulin~\cite{Gab85}.
These codes can be seen as the analogs of Reed-Solomon codes for rank metric.
Let $m\leq N$.
A codeword in an $[N \times m, \varrho , \delta]$
rank-metric code $\cC$ can be represented by a
vector $\mathbf{c}=[c_1 , c_2 , \ldots , c_{m}]$, where $c_i \in \F_{q^N}$, since $\F_{q^N}$
can be viewed as an $N$-dimensional vector space over $\F_q$.
Let $g_i\in \F_{q^N}$, $1\leq i\leq m$,  be linearly independent over $\F_q$.
The generator matrix $\cG$ of an  $[N \times m,\varrho,\delta]$
Gabidulin MRD code is given by

\begin{small}
$$
\cG=\left[\begin{array}{cccc}
g_{1} & g_{2} & \ldots & g_{m}\\
g_{1}^{[1]} & g_{2}^{[1]} & \ldots & g_{m}^{[1]}\\
\vdots & \vdots & \ddots & \vdots\\
g_{1}^{[K-1]} & g_{2}^{[K-1]} & \ldots & g_{m}^{[K-1]}\end{array}\right],
$$
\end{small}
where $K=m-\delta+1$, $\varrho=NK$, and  $[i]=q^{i}$.

In the similar way as Reed-Solomon codes, Gabidulin codes also have an equivalent interpretation as
evaluation of polynomials, however, for Gabidulin codes the special family of polynomials,
called \emph{linearized polynomials}, is evaluated on a set of linearly independent points over the base field $\F_q$.
A linearized polynomial $f(x)$ over $\F_{q^N}$ of $q$-degree $n$ has the form $f(x)=\sum_{i=0}^{n}a_ix^{q^i}$,
where $a_i\in \F_{q^N}$, and $a_{n}\neq 0$. It is easy
to see that evaluation of a linearized polynomial is an $\F_{q}$-linear
transformation from $\F_{q^N}$ to itself, i.e., for any $\alpha, \beta \in \F_q$ and
$a,b\in\F_{q^N}$, we have $f(\alpha a+\beta b)=\alpha f(a)+\beta f(b)$~\cite{MWSl78}. Therefore, a codeword in Gabidulin code
$\mathcal{C}$ can be defined as $\mathbf{c} = [f(g_1),f(g_2),\ldots, f(g_m)]$, where $f(x)$ is the
linearized polynomial  of $q$-degree $K-1$ with coefficients given by the information message, and $g_1,\ldots,g_m\in \F_{q^N}$
are linearly independent over $\F_q$~\cite{Gab85}.


\subsubsection{Rank Errors and Rank Erasures Correction}
\label{subsubsec:rank error}

Let $\cC\subseteq \F_{q^N}^m$ be a Gabidulin MRD code with minimum distance $\delta$.
Let $\mathbf{c}\in \cC$ be the transmitted
codeword and let $\mathbf{r}=\mathbf{c}+\mathbf{e_{total}}$ be the received word.
The code $\cC$ can correct any vector error of the form $\mathbf{e_{total}}=\mathbf{e_{error}}+\mathbf{e_{erasure}}=
(e_1\mathbf{u_{1}}+ \ldots +e_t\mathbf{u_{t}})+ (r_1\mathbf{v_{1}}+ \ldots +r_s\mathbf{v_{s}})$ as
long as $2t+s\leq \delta-1$. The  first part $\mathbf{e_{error}}$ is called  a \emph{rank error}
of rank $t$, where $e_i\in\F_{q^N}$ are linearly independent over the base field $\F_q$,
unknown to the decoder, and $\mathbf{u}_i\in\F_q^m$
are linearly independent vectors of length~$m$, unknown to the decoder.
The second part $\mathbf{e_{erasure}}$ is called a\emph{ rank erasure}, where
$r_i\in\F_{q^N}$ are linearly independent over the base field $\F_q$,
unknown to the decoder, and $\mathbf{v}_i\in\F_q^m$
are linearly independent vectors of length~$m$,  and known to the decoder.
Note, that in the matrix form we can write
\begin{equation}
\label{eq:rank error}
 \mathbf{e_{total}}=\left[e_1\ldots e_t\right]\left[\begin{array}{c}
                          \mathbf{u_1}\\
                           \hline
                            \vdots \\
                            \hline
                           \mathbf{u_t}
                          \end{array}
\right]
+\left[r_1\ldots r_s\right]\left[\begin{array}{c}
                            \mathbf{v}_1 \\
                            \hline
                            \vdots \\
                            \hline
                            \mathbf{v}_s
                          \end{array}
\right].
\end{equation}
\cite{Gab85,SiKs09} present decoding algorithms for rank-metric codes.
\begin{figure*}[!t]
\normalsize

  \centering
  \subfloat[]{\label{fig:zigzag_des}\scalebox{0.259}{\input{zigzag_new.pstex_t}}}
  \hspace{0.3in}
  \subfloat[]{\label{fig:zigzag_repair}\scalebox{0.25}{\input{zigzag_repair.pstex_t}}}

  \caption{Illustration of the second node repair process in $(5,3)$ Zigzag code: (a) for error free system, (b) for system with erroneous information at the first storage node.}
  \label{fig:zigzag}

\hrulefill
\end{figure*}

\subsection{MDS Array Codes for Distributed Storage}
\label{subsec:MDS Array Codes}

A linear \textit{array code} $C$ of dimensions $\alpha\times n$ over $\F_q$ is defined as a linear subspace of $\F_q^{\alpha n}$. Its
minimum distance $d_{\min}$ is defined as the minimum Hamming distance over  $\F_{q^\alpha}$, when
we consider the codewords of $C$ as vectors of length $n$ over $\F_{q^\alpha}$. An array code $C$ is
called an $(n,k)$ \textit{maximum distance separable} (MDS) code if $|C|=q^{\alpha k}$, where $k=n-d_{\min}+1$~\cite{BlRo99, CaBr06}.
Note, that an MRD code is also an MDS array code.

Let $\mathbf{x}=[\mathbf{x}_1,\mathbf{x}_2,\ldots,\mathbf{x}_k]\in \F_q^{\alpha k}$ be an information vector, $\mathbf{x}_i \in \F_q^\alpha$
is a block of size $\alpha$, for all $1\leq i\leq k$.
These $k$ blocks are encoded into $n$ encoded blocks $\mathbf{y}_i\in \F_q^\alpha$, $1\leq i\leq n$, stored in $n$ nodes
 of size $\alpha$,  in the following way:
$$\mathbf{y}=\mathbf{x}\mathbf{G},
$$
where $\mathbf{y}=[\mathbf{y}_1,\mathbf{y}_2,\ldots,\mathbf{y}_n]$ and the generator matrix $\mathbf{G}$ is an $k\times n$ block matrix with blocks of size $\alpha\times\alpha$ given by:

\begin{small}
$$\mathbf{G}=\left[\begin{array}{cccc}
            A_{1,1} & A_{1,2} & \ldots & A_{1,n}  \\
            A_{2,1} & A_{2,2} & \ldots & A_{2,n}  \\
            \vdots & \vdots & \ddots & \vdots \\
            A_{k,1} & A_{k,2} & \ldots & A_{k,n}
          \end{array}
\right].
$$
\end{small}
An array code $C$ \textit{has an MDS property} if any blocks sub matrix of $\mathbf{G}$ of size $k\times k$ is of the full rank.
In other words,  if an $(n,k)$ MDS code is used to store data in a system, and
any set of  $n-k$ storage nodes fails, the original data can be recovered from the $k$ surviving nodes.

We say that an MDS code satisfies \textit{optimal repair property}, if a single failed node can be repaired by downloading
$\alpha d/(d-k+1)$ elements from every node of any $d$-subset of surviving nodes~\cite{dimakis}.

\subsubsection{Examples of Optimal Repair MDS Array Codes}
\label{subsubsec:MDS examples}

In the following, we present two examples of the optimal repair MDS array codes for DSS, which we will use further for illustration of our coding scheme.
Due to space constraints, we only describe the MDS codes used in the examples. For general constructions, interested readers may refer to the respective papers that present these codes.

\begin{example}
\label{ex:zig-zag}
{(5,3) Zigzag code}~\cite{bruck11}.
This class of MDS array codes \cite{bruck11} is  based on generalized permutation matrices.
For the $(5,3)$ Zigzag code presented in Fig.~\ref{fig:zigzag}, the first three nodes are systematic nodes
which store the data $[c_1, c_2,\ldots, c_{12}]$, with $\alpha=4$.  The block generator matrix for this code is given by

\begin{small}
$$
\textbf{G}=\left[\begin{array}{ccccc}
               I & \emph{0} & \emph{0} & I & I \\
               \emph{0} & I & \emph{0} & I & A_2 \\
               \emph{0} & \emph{0} & I & I & A_3
             \end{array}
\right],
$$
$$
A_2 = \left[\begin{array}{cccc}
              0 & 0 & 1 & 0 \\
              0 & 0 & 0 & 1 \\
              2 & 0 & 0 & 0 \\
              0 & 2 & 0 & 0
            \end{array}
\right],
A_3=\left[\begin{array}{cccc}
              0 & 1 & 0 & 0 \\
              2 & 0 & 0 & 0 \\
              0 & 0 & 0 & 2 \\
              0 & 0 & 1 & 0
            \end{array}
\right],
$$
\end{small}
where $I$ and $\emph{0}$ denote the identity matrix and all-zero matrix, respectively.
Fig.~\ref{fig:zigzag_des} describes node repair process  for $(5,3)$ Zigzag code. When the second node fails, the newcomer node downloads the symbols from the shaded locations at the surviving nodes.
\end{example}

\begin{example}
\label{ex:hadamard}
{$(5,3)$ Hadamard Design codes} \cite{dimitris11}.
This class of MDS array codes employs interference alignment  strategies in order to perform node repair.
In the $(5,3)$ example presented in Fig.~\ref{fig:hadamard}, the first three nodes are systematic nodes which
store the data, $\mathbf{y}_i = [c_{(i-1)\alpha +1},\ldots, c_{i\alpha}]$, $1\leq i\leq 3$.
The block generator matrix for a $(5,3)$ Hadamard design based code is given by

\begin{small}
$$
\textbf{G}=\left[\begin{array}{ccccc}
               I & \emph{0} & \emph{0} & I & A_{1,5} \\
               \emph{0} & I & \emph{0} & I & A_{2,5} \\
               \emph{0} & \emph{0} & I & I & A_{3,5}
             \end{array}
\right],
$$
\end{small}
where $A^T_{i,5}=a_iX_i+b_iX_4+I$, $1\leq i\leq 3$,
 $X_i = I_{2^{i-1}}\otimes\text{blkdiag}\left(I_{\frac{\alpha}{2^i}}, -I_{\frac{\alpha}{2^i}}\right)$, $\alpha = 2^{4}$, and $a_i$ and $b_i \in \mathbb{F}_q$
satisfy $a_i^2-b_i^2 = -1$. The process of the second node repair is illustrated in  Fig.~\ref{fig:hadamard_des}. During this process, the newcomer downloads $\mathbf{y}_4V$ and $\mathbf{y}_5V$ from node $4$ and $5$, respectively (see $D_1$ in Fig.~\ref{fig:hadamard_des}). The newcomer uses $\widehat{V}$ and $\widetilde{V}$ as repair matrices corresponding to node $1$ and $3$, respectively, where $\widehat{V}$ and $\widetilde{V}$ are some basis for the column-space of $[V~ A_{1,5}V]$ and $[V~ A_{3,5}V]$, respectively. The information downloaded from node $1$ and $3$ is used to cancel the interference terms (contribution of $\mathbf{y}_1$ and $\mathbf{y}_3$ in $D_1$). After interference mitigation, a linear system of equations is solved to get $\mathbf{y}_2$.
\end{example}

\section{Code Construction for Static Error Model}
\label{sec:construction}

In this section we present our coding scheme and prove its error tolerance under the static error model. We illustrate the idea by using examples from Sec.~\ref{subsubsec:MDS examples} and prove that this construction is optimal for the static error model.

\subsection{The Construction}

Let $\mathbb{M}\in \F_q^{KN}$ denote a file of size $M=KN$, $K<N$. $\mathbb{M}$ is partitioned into $K$ parts of size $N$ each.
We form an $N\times K$ matrix $\cM$ over $\F_q$, where an $i$th part of $\mathbb{M}$ forms an $i$th column of $\cM$, for all $1\leq i\leq K$.
Let $\cC$ be an $[N \times m,\varrho=NK,\delta=m-K+1]$ Gabidulin MRD code, with $m\leq N$. Let $\mathbf{c}_{\cM}\in \F_{q^N}^m$ be the codeword
in $\cC$ which corresponds to the information matrix $\cM$.
Let $\alpha, k$ be positive integers such that $m=\alpha k$ and let $Q=q^N$.
Let $C$ be an $(n,k)$ optimal repair MDS array code of dimensions $\alpha \times n$.
We partition the vector $\mathbf{c}_{\cM}$ into $k$ parts of size $\alpha$ and form $n$ nodes of size $\alpha$ each,
according to the encoding algorithm of the code $C$.
Note, that we use an MDS array code
over $\F_q\subseteq \F_{q^N}=\F_Q$, i.e., its generator matrix is over $\F_q$, and during the process of node
repair, a set of surviving nodes transmits linear combinations of the stored elements  with the coefficients from $\F_q$.

The following theorem shows that if $2t\alpha+1\leq \delta$ the system
tolerates up to $t$ erroneous nodes.

\begin{theorem}Let $t$ be the number of erroneous nodes in the system based on concatenated MRD and optimal repair MDS array codes.
If $2t\alpha+1\leq \delta$, then the original data can be recovered from any $k$ nodes.
\end{theorem}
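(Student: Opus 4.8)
The plan is to work entirely in the rank-metric picture, identifying each symbol of $\F_{q^N}=\F_Q$ with a column vector in $\F_q^N$, so that the content of a node (an $\alpha$-tuple over $\F_{q^N}$) becomes an $N\times\alpha$ matrix over $\F_q$ and the outer codeword $\mathbf{c}_{\cM}$ becomes an $N\times m$ matrix over $\F_q$. The central structural observation is that the inner code $C$ and every repair operation live over the base field $\F_q$: the generator matrix $\mathbf{G}$ has entries in $\F_q$, and each repair writes the regenerated node as an $\F_q$-linear combination of the downloaded symbols. I would therefore define the \emph{error column space} $E\subseteq\F_q^N$ to be the $\F_q$-span of all columns of the error matrices of the $t$ corrupted nodes. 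Since each corrupted node stores $\alpha$ symbols and so contributes at most $\alpha$ columns, $\dim_{\F_q} E\leq t\alpha$.

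Next I would track how the errors propagate. Writing the content of each node as (error-free content) $+$ (error matrix over $\F_q$), I claim the invariant that at every moment the columns of every node's error matrix lie in $E$. This follows by induction over the sequence of repairs: the $t$ initial errors lie in $E$ by definition, and because each repair takes only $\F_q$-linear combinations of downloaded symbols, the error part of any regenerated node is an $\F_q$-linear combination of columns already in $E$, hence again in $E$. The static assumption is what guarantees that the initial errors, and therefore $E$ itself, remain fixed throughout. This is precisely the point where using an inner code over $\F_q$ rather than over $\F_Q$ is essential: it confines the accumulated error to the fixed $t\alpha$-dimensional space $E$ even though the Hamming weight of the error may grow without control as corruption spreads from node to node through repairs.

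With this invariant established, consider a data collector reading any $k$ nodes. By the MDS property the corresponding $k\times k$ block submatrix of $\mathbf{G}$ is invertible over $\F_q$, so applying its inverse recovers $\mathbf{c}_{\cM}+\mathbf{e_{total}}$, where $\mathbf{e_{total}}\in\F_{q^N}^m$ is the image, under an $\F_q$-linear map, of the error matrices of the read nodes. Since that map forms only $\F_q$-linear combinations of symbols whose columns lie in $E$, every column of $\mathbf{e_{total}}$ again lies in $E$, whence $\operatorname{rank}(\mathbf{e_{total}})\leq\dim_{\F_q} E\leq t\alpha$. I would then invoke the rank-error-correction guarantee for the Gabidulin code $\cC$: a rank error of rank $t\alpha$ with no rank erasures ($s=0$) is correctable whenever $2(t\alpha)\leq\delta-1$, i.e. whenever $2t\alpha+1\leq\delta$, which is exactly the hypothesis. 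Decoding therefore returns $\mathbf{c}_{\cM}$, from which $\cM$ and the original file $\mathbb{M}$ are read off.

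The main obstacle is not the final decoding step, which is immediate from the rank-metric decoding theory already quoted, but establishing and using the invariant that all propagated errors stay inside the $t\alpha$-dimensional space $E$. The delicate bookkeeping is keeping rank weight and Hamming weight distinct: a naive Hamming-distance analysis of the inner code fails, because corruption can reach arbitrarily many coordinates, whereas the relevant quantity for the outer decoder — the $\F_q$-rank of the error it sees — never exceeds $t\alpha$. I would be careful to state explicitly that both the decoding inversion and every repair use coefficients in $\F_q$, since it is exactly this that prevents any linear combination from enlarging $E$.
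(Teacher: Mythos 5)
Your proof is correct and takes essentially the same approach as the paper: your column-space invariant $E$ is exactly the paper's observation that every propagated error can be written as $\mathbf{e}^{i_j}B_{\ell}^{i_j}$ with the propagation matrices $B_{\ell}^{i_j}$ over $\F_q$, so the error reaching the data collector factors as $\mathbf{e}\mathbf{B}'$ with $\mathbf{B}'\in\F_q^{t\alpha\times k\alpha}$ and hence has $\F_q$-rank at most $t\alpha$. The concluding step — correctability by the Gabidulin code when $2t\alpha+1\leq\delta$ — is identical in both arguments.
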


\begin{proof}
Let $\mathbf{c}_{\cM}\in \F_{Q}^m$ be the codeword
in $\cC$ which corresponds to an information matrix $\cM$, and let
$[\mathbf{x}_1,\mathbf{x}_2,\ldots,\mathbf{x}_k]$, $\textbf{x}_i\in\F_{Q}^{\alpha}$, be
the partition of $c_{\cM}$ into $k$ parts of size $\alpha$ each.
Let $[\mathbf{y}_1,\mathbf{y}_2,\ldots,\mathbf{y}_n]$, $\textbf{y}_i\in\F_{Q}^{\alpha}$ be the encoded blocks stored in $n$ nodes.

Let $S=\{i_1,i_2,\ldots,i_t\}$  be the set of indices  of the erroneous nodes.
Hence  the  $i_j$th node, $i_j\in S$, contains  $\sum_{\ell=1}^{k}\mathbf{x}_{\ell}A_{\ell,i_{j}}+\mathbf{e}^{i_j}$,
where $\mathbf{e}^{i_j}=[e^{i_j}_1,e^{i_j}_2,\ldots,e^{i_j}_{\alpha}]\in \F_{Q}^\alpha$ denotes an adversarial error introduced by the $i_j$th node.
When the failed nodes are being repaired, the errors from adversarial nodes propagate
to the repaired nodes. In particular, $\ell$th node, $1\leq \ell \leq n$, contains
$\sum_{j=1}^{k}\mathbf{x}_jA_{j,\ell}+\sum_{j=1}^{t}\mathbf{e}^{i_j}B_{\ell}^{i_j}$,
where $B_{\ell}^{i_j}\in \F_q^{\alpha\times \alpha}$ represents the propagation of error $\mathbf{e}^{i_j}$ and
depends on the specific choice of an MDS array code.
Suppose a data collector contacts a subset $D$ with $k$ nodes and
downloads $\sum_{j=1}^{k}\mathbf{x}_jA_{j,i}+\sum_{j=1}^t \mathbf{e}^{i_j}B_{i}^{i_j}$ from any $i$th node, $i\in D$.
If these $k$ nodes are all systematic nodes, then
we obtain $[\mathbf{x}_1,\mathbf{x}_2,\ldots,\mathbf{x}_k]+\mathbf{eB}$, where $\mathbf{B}$ is the blocks matrix of
size $t\alpha\times k\alpha$ over $\F_q$  given by

\begin{small}
$$\mathbf{B}=\left[
      \begin{array}{cccc}
        B_1^{i_1} &  B_2^{i_1} & \ldots &  B_k^{i_1} \\
        B_1^{i_2} &  B_2^{i_2} & \ldots &  B_k^{i_2} \\
        \vdots & \vdots & \ddots & \vdots \\
        B_1^{i_t} &  B_2^{i_t} & \ldots &  B_k^{i_t} \\
      \end{array}
    \right]
$$
\end{small}
and $\mathbf{e}=[\mathbf{e}^{i_1}, \mathbf{e}^{i_2},\ldots, \mathbf{e}^{i_t}]$.
Otherwise, we obtain $[\mathbf{x}_1,\mathbf{x}_2,\ldots,\mathbf{x}_k]+\mathbf{eB'}$, where the block matrix
$\mathbf{B}'\in\F_q^{t\alpha\times k\alpha}$ represents the coefficients of $\mathbf{e}$ obtained by the decoding of the code $C$.
Since  the rank of $\mathbf{e}$ over $\F_q$ is at most $t\alpha$, and $\delta\geq 2t\alpha+1$, the MRD code $\cC$ can correct this error.
\end{proof}

\begin{figure*}[!t]
\normalsize

  \centering
  \subfloat[]{\label{fig:hadamard_des}\scalebox{0.25}{\input{hadamard_new.pstex_t}}}
  \hspace{0.5in}
  \subfloat[]{\label{fig:hadamard_repair}\scalebox{0.25}{\input{hadamard_repair.pstex_t}}}

  \caption{Illustration of node repair in $(5,3)$ Hadamard design based codes: (a) in error free system, (b) in the presence of error at the first storage node.}
  \label{fig:hadamard}

\hrulefill
\end{figure*}

\vspace{-0.1in}

\vspace{0.5cm}

\subsection{Examples}
\label{subsec:dss_dynamicsExamples}
In this subsection we illustrate the idea of the construction for the case where an adversary pollutes the information stored on a single storage node. We demonstrate  that the rank of the error introduced by an adversary does not increase due to node repair dynamics under  the static error model. Hence, a  data collector can recover the correct original information using a decoder for an MRD code. It is important to note that in our construction, any optimal repair MDS array code from DSS literature can be used as the inner code. In this subsection, we illustrate the idea of our construction with the help of two examples drawn from two different classes of optimal repair MDS array codes for DSS, presented in Section~\ref{subsubsec:MDS examples}.

\begin{example} Let $C$ be the (5,3) Zigzag code from Example~\ref{ex:zig-zag}. Its
first three systematic nodes store a codeword $\textbf{c}=[c_1, c_2,\ldots, c_{12}] \in \mathbb{F}_{Q}^{12}$, $Q=q^N$, from Gabidulin MRD code, which is obtained by encoding the original data.  The content stored in $i$th systematic node, $1\leq i\leq 3$, is $\mathbf{y}_i = [c_{(i-1)\alpha +1},\ldots, c_{i\alpha}] \in \mathbb{F}_{Q}^{\alpha}$. Note that $\alpha = 4$ in this example.
Let us assume that an adversary attacks the first storage node and introduces erroneous information. The erroneous information at the first node can be modeled as $\mathbf{y}_1 + \mathbf{e} = [c_1, c_2, c_3, c_4]+[e_1, e_2, e_3, e_4]$. Now assume that the second node fails. The system is oblivious to the presence of pollution at the first node, and employs an exact regeneration strategy to reconstruct the second node. The reconstructed node downloads the symbols from the shaded locations at the surviving nodes, as described in Fig.~\ref{fig:zigzag_repair}, and solves a linear system of equations to obtain $[c_5, c_6, c_7, c_8] + [-e_1, -e_2, -2^{-1}e_1, -2^{-1}e_2]$, where $2^{-1}$ denotes the inverse element of $2$ in $\mathbb{F}_q$, $q\geq 3$. Now assume that a data collector accesses the first three nodes in an attempt to recover the original data. The data collector now has access to $\widetilde{\mathbf{c}} = \mathbf{c} + \mathbf{e}[I,~B_2,~0]$, where

\begin{small}
\begin{equation}
B_2 = \left[ \begin{array}{cccc}-1&0&-2^{-1}&0\\0&-1&0&-2^{-1}\\0&0&0&0\\0&0&0&0\\ \end{array}\right].
\end{equation}
\end{small}
Note that $\widetilde{\mathbf{c}}$ contains an error of rank at most four. Therefore, the original MRD codeword $\mathbf{c}$ and subsequently the original information can be recovered, using an MRD code with rank distance at least nine.
\end{example}

\begin{example}
Let $C$ be a (5,3) Hadamard design based code, described in Example~\ref{ex:hadamard}. Its
first three nodes store $\mathbf{y}_i = [c_{(i-1)\alpha +1},\ldots, c_{i\alpha}] \in \mathbb{F}_{Q}^{\alpha}$, $1\leq i\leq 3$, where $\textbf{c}=[c_1, c_2,\ldots, c_{3\alpha}] \in \mathbb{F}_{Q}^{3\alpha}$ is a codeword of a Gabidulin MRD code, which is obtained by encoding the original data.
Suppose an adversary modifies the information stored at the first node to $\mathbf{y}_1 + \mathbf{e} = [c_1,\ldots, c_{\alpha}] + [e_1,\ldots, e_{\alpha}]$. When the second node fails, a newcomer, unaware of the presence of error at the first node, employs the interference alignment based strategy described in Example~\ref{ex:hadamard} and depicted in  Fig.~\ref{fig:hadamard}.
 After interference mitigation, a linear system of equations is solved to obtain $\mathbf{y}_2 + \mathbf{e}B_2$. Assuming that a data collector contacts the first three nodes, it receives $[\mathbf{y}_1,~ \mathbf{y}_2,~ \mathbf{y}_3] + \mathbf{e}[I,~ B_2,~ 0]$ which contains an error of rank at most $\alpha$. This allows the recovery of uncorrupted information using an MRD code of sufficient minimum rank distance.
\end{example}

\subsection{Code Parameters}
\label{subsec: parameters}

The upper bound on the amount of data that can be stored  reliably in the general system with
$t < \frac{k}{2}$ corrupted nodes, called \textit{resilience capacity} and denoted by $C(\alpha, \beta)$,  was presented by Pawar et al.~\cite{pawar}. This bound is given by
\begin{equation}\label{bound}
C(\alpha, \beta)\leq \sum_{i=2t+1}^{k}\min\{(d-i+1)\beta,\alpha\}.
\end{equation}
The authors  in~~\cite{pawar} provided the explicit
construction of the codes that attain this bound, for bandwidth-limited regime.
However, this construction has practical limitations for large values of $t$ since the decoding algorithm
presented in~\cite{pawar} is exponential in $t$.

 Next, we show that our constructed codes attain  bound~(\ref{bound}) and thus, are optimal. In addition, the decoding of codewords in the construction presented in our paper  is efficient since it is based on two efficient decoding algorithms: one, for an MDS array code, and two, for a Gabidulin code.

Let the parameters $K, N, m, \delta, \alpha, k, n$ be as described in the construction of Section~\ref{sec:construction}.
Then $m=K+\delta-1$ and  $m=\alpha k$.
Let $t$ be an integer such that $\delta = 2t\alpha+1$. Then $\alpha k=K+\delta-1=K+2\alpha t$, and hence $K=\alpha (k-2t)$.

Now we compare this result with the bound~(\ref{bound}).
Let $C$ be an MDS array code with optimal repair property. Then  $\beta =\frac{\alpha}{d-k+1}$.
Therefore, we can rewrite bound~(\ref{bound}) as follows:
\begin{flalign*}
\label{bound 2}
&C(\alpha, \beta=\frac{\alpha}{d-k+1})\leq&\\
&\sum_{i=2t+1}^{k}\min\{(d-i+1)\frac{\alpha}{d-k+1},\alpha\}=
\alpha(k-2t).&
\end{flalign*}
Thus, the coding scheme proposed in this paper achieves the bound in (\ref{bound}).

\begin{remark}
Although the static error model is less general than
the  model considered in~\cite{pawar}, the upper bound~(\ref{bound})
applies to the static model as well. Pawar et al.~\cite{pawar}
obtain this upper bound by evaluating a cut of the information flow graph
corresponding to a particular node failure sequence,
pattern of nodes under adversarial attack and data collector.
This information flow graph is also valid in the context of the static error model.
Consequently its cut, which provides an upper bound on the information
flow and represents the amount of data that can be reliably stored on
DSS, is correct under the static model.
As shown in this paper, this bound is tight for static model at MSR point,
which might not be the case for the general error model considered
in~\cite{pawar}.
\end{remark}

\begin{remark}
Recently, Rashmi et al.~\cite{RSRK12} considered a scenario,
referred as `\emph{erasure}',  where some nodes which are supposed to provide data during node repair become unavailable.
It is easy to see from~(\ref{eq:rank error})
that our construction can also correct such erasures, as long as the minimum distance of the corresponding MRD code is large enough.
The codes obtained by our construction also attain the bound on the capacity derived in~\cite{RSRK12}.
However, while our construction works with any MSR code and in
particular with an MSR code with high rate, it provides a solution for  a restricted error model.
\end{remark}

\section{Dynamic Error Model}
\label{sec:dynamic}
In this section, we consider the problem of designing coding schemes for DSS that work under dynamic error model. In this attack model, an adversary can have access to at most $t$ storage nodes throughout the life span of DSS and can potentially send different data each time a node under its control is meant to send data during node repair as well as data reconstruction process.

Note that coding scheme proposed for the static error model hinges on the fact that each time an attacked node is requested for the data to be sent, it sends some linear combinations of the data that has been modified on it by adversary, which the adversary is allowed to do only once. Therefore, the rank of the error that a single node under static attack causes, throughout the operation of DSS, is bounded above by $\alpha$. This is not the case under the dynamic error model as a single attacked node can inject an error of large rank if it is utilized in multiple node repairs, which may render the data stored on DSS useless.

Towards this model, some results are presented in \cite{pawar} and \cite{RSRK12}. As we have discussed earlier, the coding scheme proposed in \cite{pawar} does not have an efficient decoding during the data reconstruction process and it works specifically with bandwidth efficiently repairable codes at MBR point, however we focus on MSR point in this paper. The coding scheme of \cite{RSRK12} deals with the dynamic error model at MSR point, but there scheme works only for low rate, i.e., $2k \leq n+1$. In coding scheme proposed in \cite{RSRK12}, an adversarial node does not inject errors into any other node during node repair, as this scheme allows newcomer to perform error-free exact repair.

Next, we present two solutions to deal with attack under the dynamic error model. The first solution aims to correct errors during the node repair process by using a simple approach, which is explained in the following subsection. The second approach to cope with dynamic errors is based on existing literature on subspace signatures.


\subsection{Na\"{\i}ve Scheme for Dynamic Error Model}
In the dynamic error model, where there is no constraint on an adversarial nodes in terms of the data that it can provide during node repair as well as data reconstruction, one solution might be to design coding schemes that correct errors while node repair. One such scheme is proposed in \cite{RSRK12} for low rate codes, where a newcomer node utilizes the redundancy in the downloaded data to perform correct exact repair even in the presence of errors in the downloaded data. Next, we analyze the maximum amount of information that can be stored on the DSS employing concatenated codes proposed in Section~\ref{sec:construction} under the dynamic error model, if an error free node repair needs to be performed.

When a storage node fails, a newcomer node downloads $d\beta$ symbols from any $d$ surviving nodes $(d \geq k)$. Since there can be at most $t$ adversarial nodes present in the system, the newcomer node receives at most $t\beta$ erroneous symbols. Therefore, out of $k\alpha$ symbols of an MRD codeword, by~(\ref{eq:trade-off})  the newcomer has $(k-1)\beta + \alpha$ symbols (using the fact that the inner code is an MDS code and we perform bandwidth efficient repair). All the other $k\alpha-(k-1)\beta -\alpha=(\alpha-\beta)(k-1)$ symbols of an MRD codeword can be considered as the erased symbols. Let $\ell$ denote the number of information symbols (over $\mathbb{F}_{q^N}$) that are stored on the DSS. Then the minimum distance $\delta$ of the corresponding MRD code satisfies $\delta=k\alpha-\ell+1$. Therefore
we can reconstruct the entire MRD codeword and thus the data stored on the failed node, if we have
\begin{equation*}
\delta=k\alpha - \ell +1 \geq 2t\beta + (k-1)(\alpha - \beta) +1.
\end{equation*}
This gives us
\begin{equation}
\label{eq:bound2}
\ell \leq \alpha  + (k-2t-1)\beta.
\end{equation}
Note that the bound in (\ref{bound}) is still applicable. For $k = 2t + 1$, the right hand side expression in (\ref{eq:bound2}) is equal to that in (\ref{bound}). Therefore, this na\"{\i}ve repair scheme is optimal in terms of the capacity of DSS even in the dynamic error model. However, the difference between these bounds is monotonically increasing with $(k-2t-1)$ and the solution proposed in this section is suboptimal for general values of system parameters $k$ and~$t$.

\subsection{Subspace Signatures Approach}
\label{subsec:dynamic_errors}

As mentioned previously, in dynamic error model an attacked node can inject a high rank error. Thus, it is desirable to restrict the rank of the aggregate error that a particular attacked node can cause in the entire system under dynamic error model. In this subsection, we propose to combine the existing literature on detecting subspace pollution with MRD codes to counter a dynamic attack. Next, we illustrate this with the help of subspace signatures proposed in \cite{zhao}.

Let us consider an $n$-nodes DSS, which employs an MRD and a bandwidth efficiently repairable code based storage scheme as explained in Section~\ref{sec:construction}. For a node $i$, content stored on it, i.e. $\mathbf{y}_i$ can be viewed as an $N \times \alpha$ matrix over $\mathbb{F}_q$. These $\alpha$ vectors of length $N$ stored on $i^{\text{th}}$ node span a subspace (column space of $\mathbf{y}_i$ when viewed as a matrix over $\mathbb{F}_q$) in $\mathbb{F}^N_{q}$ of rank at most $\alpha$. Since all elements of coding matrix and repair matrices are from $\mathbb{F}_q$, $\beta$ linear combinations sent by node $i$ for a node repair $\mathbf{y}_iV$ are nothing but, $\beta$ vectors that lie in the subspace spanned by $\mathbf{y}_i$. If we make sure that even under dynamic error model an attacked node sends vectors from the same $\alpha$-dimensional subspace of $\mathbb{F}_q^N$ during node repair and data construction, a data collector encounters at most $t\alpha$-rank error, which can be corrected with an MRD code of large enough rank distance as in the static model. Subspace signatures solve this problem of enforcing a node to send data (vectors) from the same $\alpha$-rank subspace of $\mathbb{F}_q^N$.

At the beginning, when data is encoded to be stored on a DSS, system administrator also generates subspace signatures for data stored on each node according to the procedure explained in \cite{zhao}. We assume existence of a trusted verifier, who stores all $n$ subspace signatures, one signature for each storage node. Whenever a particular node sends data during node repair or data reconstruction, the truster verifier checks the data against the stored subspace signature corresponding to that particular storage node.

For the purpose of data reconstruction, whenever a node does not pass the signature test, that node is considered an rank erasure. If $s \leq t$ nodes fail the test during data reconstruction, the data collector deals with $s\alpha$ rank erasures and $(t-s)\alpha$ rank errors. This bound on the rank of the error that the data collector encounters is explained in the following paragraph. Given that the outer MRD code has minimum rank distance $2t\alpha+1 \geq 2(t-s)\alpha + s\alpha +1$, the original data can be reconstructed without an error. 

Next, we argue how subspace signatures help restrict rank of the error introduced in the system by adversarial nodes under dynamic error model. Assume that node $i$ fails. Let $\mathcal{R}_i \subseteq \{1,\ldots,n\}\backslash\{i\}$ denote the set of $d$ surviving nodes that are contacted to repair node $i$. In order to repair node $i$, each node $j \in \mathcal{R}_i$ is supposed to send $\mathbf{y}_jV_{ji}$, where $V_{ji}$ is an $\alpha \times \beta$ repair matrix of node $i$ associated with node $j$. Since the data downloaded through all the surviving nodes is verified against subspace signatures, data from node $j$ passes the test if it is of the form $\mathbf{y}_j\widehat{V}_{ji}$, where $\widehat{V}_{ji}$ is an $\alpha \times \beta$ matrix, which may be different form $V_{ji}$. Note that  $\mathbf{y}_j\widehat{V}_{ji}$ is in the column space of $\mathbf{y}_j$.

If any of the surviving nodes does not pass the test, the trusted verifier begins the na\"{\i}ve repair for the failed node and the nodes that fail the test. During this na\"{\i}ve repair, entire data is downloaded from a set of $k-s$ nodes out of $d-s$ nodes that provide data for node repair in the first place and pass the subspace test. Here $s$ is the number of nodes that fail the subspace test. Note that each node of these $k-s$ selected nodes provides additional $\alpha - \beta$ symbols as it has already sent $\beta$ symbols (over $\mathbb{F}_q^{N}$). The decoding algorithm for MRD codes is run on $(k-s)\alpha$ symbols downloaded from the selected set of $k-s$ nodes. There can be at most $t-s$ adversary nodes present in the selected set of $k-s$ nodes ($s$ adversarial nodes that failed the subspace test are excluded from this process), which can contribute at most $(t-s)\alpha$ erroneous symbols. Since the rank distance of the MRD code is greater than $2(t-s)\alpha + s\alpha +1$, the decoding algorithm recovers the original file, which is used to get the data that is supposed to be stored on nodes being repaired.

In case when all the adversarial nodes pass the test, the data provided by each node $j \in \mathcal{R}_i$ is of the form

\begin{equation*}
\mathbf{y}_j\widehat{V}_{ji} = \mathbf{y}_jV_{ji} + \mathbf{y}_j(\widehat{V}_{ji}-V_{ji})
\end{equation*}
After performing exact repair process for node $i$, node $i$ stores
$\mathbf{y}_i + \mathbf{y_e}B_i$, where $B_i$ is an $t\alpha \times \alpha$ matrix over $\mathbf{F}_q$ and $\mathbf{y_e} = [\mathbf{y}_{i_1},\ldots, \mathbf{y}_{i_t}]$. Here $\{i_1,\ldots, i_t\}$ denotes the set of $t$ adversarial nodes. After the node repair, the trusted verifier generates a new subspace signature corresponding to the data stored on a node $i$ for future verification. At any point of time, the data stored on DSS can be represented as
\begin{equation}
\label{eq:anytime_dyn}
\widetilde{\mathbf{y}} = \mathbf{y} + \mathbf{y_eB},
\end{equation}
where an $i$th column of $\mathbf{B}$ is equal to $B_i$, $1\leq i\leq n$.
It is evident from (\ref{eq:anytime_dyn}) that the rank of the aggregate error in the system is at most $t\alpha$ and an MRD code with large enough minimum distance can ensure the reliable recovery of the original data.

The idea of using hash functions to provide tolerance against adversarial errors in DSS has previously been presented in \cite{ho10}. However, the error tolerance scheme of \cite{ho10} involves periodic verification of the data stored in the system which may not be practical. 

\section{Error Resilience for Locally Repairable Codes }
\label{sec:locality}

In this section we consider a special case where $\alpha=\beta=~1$ and
 each node can be repaired  by accessing at most $r<k$ other nodes, where
$r$ is called \emph{locality}. Note that in this scenario there \emph{exist} $r$ nodes for repair
of a single node, and \emph{not every }set of $r$ nodes can be used for a node repair.
Such codes were considered in~\cite{GHSY11,OgDa11,PKLK12,PaDi12}.
Clearly, such codes do not have an MDS property, but they have a small repair bandwidth.
It is proved~\cite{GHSY11} that the minimum distance of such a code satisfies
\begin{equation}\label{eq:locality}
d_{\min}\leq n-k+2-\left\lceil\frac{k}{r}\right\rceil.
\end{equation}
An explicit construction of codes which attain this bound with systematic
locality is presented in~\cite{GHSY11, HCL07}.
However an explicit construction for codes that attain this bound and have all symbols locality is only known for $n=\lceil\frac{k}{r}\rceil(r+1)$~\cite{PKLK12}.
In what follows we show a new construction of codes with all symbols locality which attain bound (\ref{eq:locality})
and can also correct static errors. This construction is also based on MRD codes.

\subsection{ Construction for Optimal Locally Repairable Codes}

Let $\mathcal{C}$ be an $[N\times m, Nk, \delta=m-k+1]$
Gabidulin MRD code, where each codeword is considered as a vector of length $m$
over $\F_{q^N}$, and let $r$ be the given locality. Each symbol of an MRD
codeword will be stored in a new node. We consider two cases: $m\equiv 0 (\textmd{mod } r)$ and $m\equiv k\equiv j(\textmd{mod } r)$, for $1\leq j< r$.
In the first case we partition the set of $m$ coordinates into $\frac{m}{r}$
disjoint groups $G_i$, $1\leq i\leq \frac{m}{r} $, of size $r$, and in the second case we partition the set of $m$ coordinates into $\left\lfloor\frac{m}{r}\right\rfloor$
disjoint groups $G_i$ of size $r$ and one additional group $\widehat{G}$ of size $j$.
For each group $G_i$ we add a new parity node $p_i$
to be the sum of all the $r$ symbols in the same group.  For the group $\widehat{G}$  the new parity symbol $\widehat{p}$ is equal to the sum of the $j$ elements of $\widehat{G}$.  We denote by $C^{\textmd{loc}}$ the constructed code.

\begin{example}
\label{ex:running}
 Let $m=N=8$, $k=6$, $\delta=3$ $r=4$, and $\textbf{c}=[c_1,c_2,\ldots, c_8]$ be a codeword
of an $[8\times 8, 8\cdot 6,3]$ MRD code, $c_i\in \F_{q^8}$.
Then the corresponding codeword  of $C^{\textmd{loc}}$ is equal to $\textbf{c}^{\textmd{loc}}=[c_1,c_2,\ldots, c_8,p_1,p_2]$, where
$p_1=c_1+c_2+c_3+c_4$, and $p_2=c_5+c_6+c_7+c_8$.
\end{example}

\begin{theorem} A code $C^{\textmd{loc}}$  attains  bound~(\ref{eq:locality}), i.e., it is an $[n,k,d_{min}]$ code with $d_{min}=n-k+2-\lceil\frac{k}{r}\rceil$.
\end{theorem}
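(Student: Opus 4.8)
The plan is to exploit the \emph{linearized-polynomial} description of Gabidulin codes to recognize $C^{\textmd{loc}}$ as an evaluation code, after which its minimum distance becomes a purely combinatorial question about $\F_q$-subspaces. First I would write each data symbol as $c_i=f(g_i)$, where $f$ is the linearized polynomial of $q$-degree at most $k-1$ carrying the information message, exactly as in Section~\ref{subsubsec:Gabidulin}. Since $f$ is $\F_q$-linear and every local parity is a sum with coefficients in $\{0,1\}\subseteq\F_q$, each parity is again an evaluation of $f$: $p_i=\sum_{l\in G_i}f(g_l)=f\bigl(\sum_{l\in G_i}g_l\bigr)=f(h_i)$ with $h_i:=\sum_{l\in G_i}g_l$, and likewise $\widehat p=f(\widehat h)$ with $\widehat h:=\sum_{l\in\widehat G}g_l$. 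Thus every codeword of $C^{\textmd{loc}}$ has the form $\bigl(f(\xi_1),\dots,f(\xi_n)\bigr)$, where the $n$ evaluation points are $\{g_1,\dots,g_m\}$ together with the group-sum points $h_i$ (and $\widehat h$). This reduction, which crucially uses the MRD structure rather than mere MDS-ness, is the heart of the argument.

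Next I would translate the weight of a codeword into a subspace-covering count. A nonzero linearized polynomial of $q$-degree at most $k-1$ has kernel an $\F_q$-subspace of $\F_{q^N}$ of dimension at most $k-1$, and $f(\xi)=0$ precisely when $\xi\in\ker f$; conversely the annihilator polynomial of any $\F_q$-subspace $W$ of dimension $k-1$ is a valid nonzero message (of $q$-degree $k-1$) whose kernel is exactly $W$. Hence
\[
d_{\min}(C^{\textmd{loc}})=n-\max_{\dim_{\F_q}W=k-1}\bigl|\{\,i:\xi_i\in W\,\}\bigr|,
\]
so that both the lower and the upper bound on $d_{\min}$ follow from evaluating this single maximum (the upper bound being in any case guaranteed by~(\ref{eq:locality})).

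To compute the maximum I would use the geometry of the evaluation points. The $g_i$ are $\F_q$-linearly independent, so the per-group spaces $V_i=\langle g_l:l\in G_i\rangle$ (and $V_{\widehat G}$) are in direct sum; within each group the $r+1$ points $\{g_l\}\cup\{h_i\}$ obey the single relation $\sum_l g_l=h_i$, so any $r$ of them are independent and span $V_i$. Consequently a group is either only partially covered by $W$ (at most $r-1$ of its points, at unit dimension cost per point) or entirely contained in $W$ (all $r+1$ points at the cost of $\dim V_i$ dimensions). Writing $F$ for the set of fully covered groups, a direct-sum dimension count gives $|\{\xi_i\in W\}|\le(k-1)+|F|$ with $|F|\le\lfloor(k-1)/r\rfloor$, and this is attained by taking $\lfloor(k-1)/r\rfloor$ full groups plus $(k-1)\bmod r$ extra points; one checks there are always enough groups since $k\le m$. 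Therefore $\max|\{\xi_i\in W\}|=k-1+\lfloor(k-1)/r\rfloor$, and the identity $\lceil k/r\rceil=\lfloor(k-1)/r\rfloor+1$ yields $d_{\min}=n-k+2-\lceil k/r\rceil$, matching~(\ref{eq:locality}).

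The second case $m\equiv k\equiv j\pmod r$ is handled by the same count with one unbalanced group $\widehat G$ of size $j$; the only new feature is that the smaller group is more dimension-efficient, but the clean bound $|\{\xi_i\in W\}|\le(k-1)+|F|$ together with $|F|\le\lfloor(k-1)/r\rfloor$ still caps the count at the same value, so the conclusion is unchanged. The main obstacle is making the extremal count exactly tight: one must verify both that fully covered groups never beat the $(k-1)+|F|$ budget and that the extremal subspace is actually realizable for the available number of groups, with the unbalanced group in the second case requiring extra care to confirm it does not raise the maximum.
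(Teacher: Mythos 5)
Your proposal is correct, and its first half coincides with the paper's: both recognize each local parity as an evaluation $f(h_i)$ of the message's linearized polynomial at the group-sum point, so that $C^{\textmd{loc}}$ is an evaluation code at $n$ points $\{g_1,\dots,g_m\}\cup\{h_i\}\cup\{\widehat h\}$ with the direct-sum structure across groups and the ``any $r$ of the $r+1$ points in a group are $\F_q$-independent'' property. Where you genuinely diverge is the key lemma. The paper argues operationally: it imports the upper bound on $d_{\min}$ from bound~(\ref{eq:locality}) and then proves the matching erasure-correction capability, asserting that the worst erasure pattern concentrates erasures in as few groups as possible and checking that the surviving symbols still include $k$ evaluations at linearly independent points. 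You instead invoke the kernel--weight correspondence for linearized polynomials (a nonzero message of $q$-degree at most $k-1$ vanishes exactly on a subspace of dimension at most $k-1$, and conversely every $(k-1)$-dimensional subspace is the root space of its annihilator polynomial), which yields the exact identity $d_{\min}=n-\max_{\dim W=k-1}|\{i:\xi_i\in W\}|$, and you then cap that maximum at $(k-1)+\lfloor (k-1)/r\rfloor$ by the covered/uncovered-group dichotomy. Your route buys two things the paper's does not: tightness is self-contained, since the annihilator of the extremal subspace exhibits an explicit minimum-weight codeword, so (\ref{eq:locality}) is never logically needed; and the extremal step is a rigorous dimension count rather than the paper's unproved ``worst case'' assertion about erasure patterns. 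The paper's route is shorter given that (\ref{eq:locality}) is already available. One step you flag but leave terse truly needs the care you promise: in Case 2 a fully covered unbalanced group costs only $j<r$ dimensions, so the bound $|F|\le\lfloor(k-1)/r\rfloor$ is not automatic from ``each full group costs at least $r$.'' It does hold, but only via the congruence: writing $k=ir+j$, if $\widehat G$ is fully covered then the regular full groups satisfy $r(|F|-1)\le k-1-j=ir-1$, hence $|F|-1\le i-1$ and $|F|\le i=\lfloor(k-1)/r\rfloor$; if $\widehat G$ is not fully covered then $r|F|\le k-1$ gives the same bound directly. Writing out this two-case check would close the only soft spot in an otherwise complete and correct argument.
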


\begin{proof} As it was mentioned in Section~\ref{sec:preliminaries}, a codeword of a Gabidulin code can be considered as an evaluation of a linearized polynomial $f(x)\in \F_{q^N}[x]$ on the set of $m$ linearly independent points over $\F_q$, $\{g_1,\ldots, g_m\}$. Note that for the reconstruction the original data we need at least $k$ values  $f(a_1), \ldots, f(a_k)$, such that $a_1,\ldots, a_k \in \mathbb{F}_{q^N}$ are linearly independent over $\F_q$~\cite{OgDa11}. By the linearity of $f(x)$ we have that for any group $G\neq \widehat{G}$, the new parity $p_i$ is equal to $f(g_{i_1}+g_{i_2}+\ldots+g_{i_r})$,  for $i_s\in G$, and  for $\widehat{G}$ we have $\widehat{p}=f(g_{i_1}+g_{i_2}+\ldots+g_{i_{j}})$,  for $i_s\in \widehat{G}$. Hence, any $r$ symbols in $G_i\cup \{p_i\}$ are the evaluation of $f(x)$ in $r$ linearly independent points. Now we distinguish between two cases.

  \textbf{Case 1, $m\equiv 0 (\textmd{mod } r)$:}  In this case, $n=m+\frac{m}{r}$. Let $i$ and $j$ be two integers such that $k=m-r(i+1)+j$,  $0\leq i\leq \frac{m}{r}-1$, and $0\leq j\leq r-1$. Then $\frac{m}{r}-\left\lceil\frac{k}{r}\right\rceil=i$, and according to bound~(\ref{eq:locality}), $d_{\min}-1\leq m+\frac{m}{r}-r(\frac{m}{r}-(i+1))-j-\left\lceil\frac{k}{r}\right\rceil+1=r(i+1)-j+i+1=(r+1)(i+1)-j$.
       We will prove that any  $(r+1)(i+1)-j$ erasures can be corrected by this code. In other words, after this number of erasures we still have an evaluation of $f(x)$ in $k$ linearly independent points. Note that the worst case is
        when the erasures appear in the smallest possible number of groups and when the number of erasures inside a group is maximal.
        So we consider the case when all the symbols in $i$ groups are erased, and there is a group with $r+1-j$ erasures. Then the number of the remaining
       symbols which correspond to
        linearly independent points is $m-ri-(r-j)=k$.

 \textbf{Case 2, $m\equiv k\equiv j(\textmd{mod } r)$:} for $1\leq j< r$. In this case, $n=m+\left\lceil\frac{m}{r}\right\rceil$.
    Let $i$  be an integer such that $k=ir+j$, $0\leq i\leq\lfloor\frac{m}{r}\rfloor$. Since $m =\left\lfloor\frac{m}{r}\right\rfloor r+j$, then by bound~(\ref{eq:locality}), $d_{\min}-1\leq (\left\lfloor\frac{m}{r}\right\rfloor r+j+\left\lfloor\frac{m}{r}\right\rfloor+1)-(ir+j)+1-(i+1)=(\left\lfloor\frac{m}{r}\right\rfloor-i)(r+1)+1$. As in the previous case, we consider the worst case for erasures, when all the symbols in $(\left\lfloor\frac{m}{r}\right\rfloor-i)$ groups and one additional symbol are erased. Then the number of the remaining  symbols which correspond to
        linearly independent points is $m-(\left\lfloor\frac{m}{r}\right\rfloor-i)r=j+ir=k$.
\end{proof}
\begin{example}In this example, we prove that the code from Example~\ref{ex:running} has minimum distance $4$, in other words, this code can tolerate any three erasures.
\begin{itemize}
  \item any $2$ erasures can be corrected by the MRD code, since its minimum rank distance is $3$.
  \item if two erasures are in the same group and  additional one erasure in another group,
  then we still have the values of the corresponding linearized polynomial
  $f(x)$ in the $3+4=7$ linearly independent points,
  \item  if all three erasures are in the same group, then we have values
  of $f(x)$  in $2+4=6$ linearly independent points.
\end{itemize}
 In all the cases, since $k=6$, we can correct these erasures.
\end{example}

Note that if there is an adversarial node in such a system, then the erroneous
data spreads into the whole group containing this node because of the node repair
property, in other words, the Hamming weight of an error is equal to the size of
a group. However, the rank of such an error is one, and it can be corrected by an MRD code with $\delta\geq 3$.
We generalize this property of a code $C^{\textmd{loc}}$ in the following theorem.

\begin{theorem} If the related to $C^{\textmd{loc}}$ MRD code $\mathcal{C}$ has minimum rank distance
$\delta\geq 2t+1$ then $C^{\textmd{loc}}$ can tolerate at most $t$ static errors.
\end{theorem}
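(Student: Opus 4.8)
The plan is to exploit the fact that, although a single corrupted node pollutes its entire repair group and so produces an error of large Hamming weight, every step of error propagation in $C^{\textmd{loc}}$ is $\F_q$-linear, so the \emph{rank} of the accumulated error never exceeds the number of adversarial nodes; decoding then reduces to ordinary rank-error correction in the Gabidulin code $\mathcal{C}$. First I would fix the $t$ adversarial nodes and denote by $e^{(1)},\ldots,e^{(t)}\in\F_{q^N}$ the static errors they inject, each viewed as a single length-$N$ vector over $\F_q$. Set $V=\operatorname{span}_{\F_q}\{e^{(1)},\ldots,e^{(t)}\}$, so that $\dim_{\F_q}V\le t$. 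The central claim, to be proved by induction over the (arbitrary) sequence of node failures and repairs, is that at every moment the error stored on \emph{each} node lies in $V$. The base case is immediate: the $t$ adversarial nodes carry $e^{(j)}\in V$ and every other node carries the zero error. For the inductive step I would use that a repair within a group replaces a failed symbol by an $\F_q$-linear combination—here a signed sum, since each parity is the sum of its group—of the contents of the other symbols in that group; hence the error written into the repaired node is the same $\F_q$-linear combination of errors already lying in $V$, and therefore again lies in $V$. Since the adversarial nodes keep reporting their fixed errors $e^{(j)}\in V$, the invariant is preserved through every repair.

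Next I would perform data collection by reading the $m$ nodes that store the Gabidulin codeword $[c_1,\ldots,c_m]$ directly. By the invariant, the read word equals $[c_1+E_1,\ldots,c_m+E_m]$ with each $E_i\in V$, so the error vector $[E_1,\ldots,E_m]$ has $\F_q$-rank at most $\dim_{\F_q}V\le t$. This is exactly a rank-$t$ error on the codeword $[c_1,\ldots,c_m]\in\mathcal{C}$. Because $\mathcal{C}$ is MRD with minimum rank distance $\delta\ge 2t+1$, the error-correction capability recalled in Section~\ref{subsubsec:rank error} (with no erasures and $2t\le\delta-1$) lets the decoder recover $[c_1,\ldots,c_m]$, and hence the original data; note that these $m$ nodes already carry the full MRD codeword, so the parity nodes are not needed for reconstruction.

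The main obstacle is the inductive claim that the error on every node stays inside the fixed $t$-dimensional space $V$. The repair history is unconstrained and errors can cascade through many generations of repairs, so the argument must carefully track that each generation is produced by an $\F_q$-linear combination of previously stored symbols and therefore cannot enlarge $V$, even as the Hamming weight of the aggregate error grows to span whole groups. Once this rank bound is established, the remainder is a direct invocation of MRD decoding and requires no further computation.
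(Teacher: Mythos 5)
Your proof is correct and takes essentially the same approach as the paper: the paper in fact states this theorem without a formal proof, relying on the preceding remark that a single corrupted node pollutes its whole repair group (large Hamming weight) but only contributes a rank-one error, so $t$ adversarial nodes yield an aggregate error of rank at most $t$, correctable by an MRD code with $\delta \geq 2t+1$. Your induction over the repair history, showing every node's error stays in the $\F_q$-span $V$ of the $t$ injected errors, is simply a rigorous formalization of that same argument.
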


\begin{example}
Consider the code from Example~\ref{ex:running}, and let $t=1$. Its rank distance is $3$, and
the Hamming distance is $4$.  Since the group size is $5$, by using only
Hamming metric it is impossible to correct this error, however, by applying a decoding of MRD code we can correct it.
\end{example}

\begin{remark} The construction proposed in this section can be seen as a particular case of the construction of self-repairing homomorphic codes~\cite{OgDa11}, since it can also be described in terms of linearized polynomials. In addition, this construction is closely related to construction of pyramid codes~\cite{HCL07}, and the construction given in~\cite{PKLK12}, since an MRD code can be also seen as an MDS code over an extension  field.
\end{remark}


\section{Conclusion}
\label{sec:conclusion}

A novel concatenated coding scheme for distributed storage system was presented.
The scheme makes use of rank-metric codes, in particular, MRD codes as the first step of encoding the data.
In the second step of encoding  MDS optimal repair array codes or locally repairable codes are used.
 This construction provides resilience against static adversarial errors. Moreover, when
 using MDS optimal repair array codes this scheme is optimal in terms of the resilience
 capacity, and when using locally repairable codes this scheme is optimal in terms of the
 minimum distance. A modification of the scheme based on subspace signatures provides
 resilience against dynamic errors. The resilience of this scheme against a passive
 eavesdropper is a subject of our current research~\cite{new}.



\end{document}